\newtheorem{lem}{Lemma}
\newtheorem{prop}{Proposition}
\newtheorem{thm}{Theorem}
\title{A Duality Result for Robust Optimization\\with Expectation Constraints}
\author{
 Christopher W. Miller\thanks{Department of Mathematics, University of California, Berkeley ({miller@math.berkeley.edu}). Supported in part by NSF GRFP under grant number DGE 1106400.}
}
\date{}
\begin{document}

\maketitle

\begin{abstract}
This paper demonstrates a practical method for computing the solution of an expectation-constrained robust maximization problem with immediate applications to model-free no-arbitrage bounds and super-replication values for many financial derivatives. While the previous literature has connected super-replication values to a convex minimization problem whose objective function is related to a sequence of iterated concave envelopes, we show how this whole process can be encoded in a single convex minimization problem. The natural finite-dimensional approximation of this minimization problem results in an easily-implementable sparse linear program. We highlight this technique by obtaining no-arbitrage bounds on the prices of forward-starting options, continuously-monitored variance swaps, and discretely-monitored gamma swaps, each subject to observed bid-ask spreads of finitely-many vanilla options.
\end{abstract}


\begin{keywords}
Robust optimization, expectation constraints, model-free bounds, super-replication, strong duality.
\end{keywords}


\begin{AMS}93E20, 91G80, 90C05.\end{AMS}

\pagestyle{myheadings}
\thispagestyle{plain}
\markboth{C. Miller}{A Duality Result for Robust Optimization with Expectation Constraints}

\section{Introduction}

This short paper demonstrates a practical method for concretely computing the solution of an expectation-constrained robust maximization problem by converting to a infinite-dimensional linear program which admits a natural finite-dimensional approximation. The motivation for this problem is largely financial, as it represents a model-free no-arbitrage upper bound on the value of a given derivative subject to known price bounds on finitely-many other derivatives.

The particular problem we consider is the following expectation-constrained robust optimization problem:
\begin{equation}\label{Equation:MainProblem}
\begin{array}{rcl}
\overline{p} := & \sup\limits_{\mathbb{Q}\in\mathcal{Q}} & \mathbb{E}^\mathbb{Q}\left[\sum\limits_{k=1}^n f_k(X_{T_k},\cdots,X_{T_{k-d}})\right] \\
& \text{s.t.} & \mathbb{E}^\mathbb{Q}\left[g_k(X_{T_k},\ldots,X_{T_{k-d}})\right] \geq 0\text{ for each }k\in\{1,\ldots,n\}.
\end{array}
\end{equation}
We take $f_k:\mathbb{R}^{d+1}\to\mathbb{R}$ and $g_k:\mathbb{R}^{d+1}\to\mathbb{R}^{p_k}$ to be given continuous functions for each $k\in\{1,\ldots,n\}$. We take $T_{1-d}<\cdots<T_0=0<T_1<\cdots<T_n$ as a fixed time-discretization. We fix a convex set $C\subset\mathbb{R}$ and values $x_0,\ldots,x_{1-d}\in C$. Then we let $\mathcal{Q}$ represent the collection of all probability measures $\mathbb{Q}$ under which $\{X_{T_k}\}_{k\in\{1-d,\ldots,n\}}$ is a $C$-valued martingale satisfying $X_{T_0}=x_0,\ldots,X_{T_{1-d}}=x_{1-d}$ almost-surely. We assume, for simplicity, that $f_k$ and each component of $g_k$ can be bounded from above by (possibly different) affine functions for each $k\in\{1,\ldots,n\}$.\footnote{This assumption rules out the case $\overline{p}=+\infty$ and simplifies analysis. This can be relaxed by examining where it shows up in the proof, but at the cost of significantly more work. This assumption is generally satisfied in practice, either from capping the pay-off of some financial derivative or assuming bounds on the values of $X$.} For later convenience, we denote $p:=\sum_{k=1}^n p_k$.

Our financial motivation comes from interpreting the functions $f_1,\ldots,f_n$ as the pay-off of some derivative to be super-replicated, while the functions $g_1,\ldots,g_n$ encode the pay-offs and known bid-ask spreads of a collection of other derivatives which may be used for hedging. This will be made more explicit with several concrete examples in Section~\ref{Section:Examples}.

The heart of this paper is an investigation of a duality relationship between the expectation-constrained robust maximization problem \eqref{Equation:MainProblem} and the following minimization problem:
\begin{equation}\label{Equation:EquivalentProblem}
\begin{array}{rcl}
\overline{d} := & \inf\limits_{\left(\lambda,\phi,h\right)\in\mathcal{A}} & \phi_1(x_0,x_0,\ldots,x_{1-d}) \\
& \text{s.t.} & h_k = f_k + \lambda_k\cdot g_k \text{ for each }k\in\{1,\ldots,n\}\\
&& \phi_n \geq h_n \\
&& \phi_k(y_k,\ldots,y_{k-d}) \geq h_k(y_k,\ldots,y_{k-d}) + \phi_{k+1}(y_k,y_k,\ldots,x_{y-d+1})\\
&& \hspace{1cm}\text{for each }k\in\{1,\ldots,n-1\}\text{ and all }(y_k,\ldots,y_{k-d})\in C^{d+1}\\
&& \phi_k\text{ is concave in its first entry for each }k\in\{1,\ldots,n\}\\
&& \lambda_k \geq 0\text{ component-wise in }\mathbb{R}^{p_k}\text{ for each }k\in\{1,\ldots,n\},
\end{array}
\end{equation}
where
\[\mathcal{A} := \mathbb{R}^{p_1}\times\cdots\times\mathbb{R}^{p_n}\times C_0(C^{d+1},\mathbb{R})^{2n}.\]
While \eqref{Equation:MainProblem} involves maximization over a collection of martingale measures subject to an expectation constraint, \eqref{Equation:EquivalentProblem} involves minimization over continuous functions subject to concavity constraints. This essentially encodes the computation of iterated concave envelopes, a point which is made clear in Section~\ref{Section:Proof}.

The main result of this paper is the following theorem:
\begin{thm}\label{Theorem:MainResult}
$\overline{p}\leq\overline{d}$. Furthermore, if there exists $\mathbb{Q}\in\mathcal{Q}$ such that $\mathbb{E}^\mathbb{Q}\left[g_k\left(X_{T_k},\ldots,X_{T_{k-d}}\right)\right]\geq 0$ component-wise for each $k\in\{1,\ldots,n\}$, then $\overline{p}=\overline{d}$.
\end{thm}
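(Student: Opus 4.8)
The plan is to prove the two assertions separately. The inequality $\overline{p}\leq\overline{d}$ is the ``weak duality'' direction and should follow from a direct verification argument. Given any feasible $(\lambda,\phi,h)\in\mathcal{A}$ for \eqref{Equation:EquivalentProblem} and any $\mathbb{Q}\in\mathcal{Q}$ feasible for \eqref{Equation:MainProblem}, I would estimate $\mathbb{E}^\mathbb{Q}[\sum_k f_k]$ from above by $\phi_1(x_0,\ldots,x_{1-d})$. The mechanism is a backward telescoping argument: since $\lambda_k\geq 0$ and $\mathbb{E}^\mathbb{Q}[g_k]\geq 0$, we have $\mathbb{E}^\mathbb{Q}[f_k]\leq\mathbb{E}^\mathbb{Q}[f_k+\lambda_k\cdot g_k]=\mathbb{E}^\mathbb{Q}[h_k]$, so it suffices to bound $\mathbb{E}^\mathbb{Q}[\sum_k h_k]$. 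Working backward from $k=n$ using the recursive inequalities $\phi_k\geq h_k+\phi_{k+1}(\cdot)$ together with $\phi_n\geq h_n$, and crucially exploiting that each $\phi_k$ is concave in its first entry so that the martingale property of $X$ and the conditional Jensen inequality give $\mathbb{E}^\mathbb{Q}[\phi_{k+1}(X_{T_k},X_{T_k},X_{T_{k-1}},\ldots)\mid\mathcal{F}_{T_{k-1}}]\leq\phi_{k+1}(\mathbb{E}^\mathbb{Q}[X_{T_k}\mid\mathcal{F}_{T_{k-1}}],\ldots)$, I expect the tower property to collapse the sum down to $\phi_1(x_0,\ldots,x_{1-d})$. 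Taking the infimum over feasible $(\lambda,\phi,h)$ yields $\overline{p}\leq\overline{d}$.

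The equality under the Slater-type feasibility assumption is the genuine content of the theorem, and I would approach it by establishing the reverse inequality $\overline{d}\leq\overline{p}$. The natural strategy is to recognize \eqref{Equation:EquivalentProblem} as encoding a sequence of iterated concave envelopes (as the excerpt hints) and to exhibit a near-optimal feasible triple whose value is controlled by $\overline{p}$. Concretely, I would try to construct the optimal $\phi_k$ explicitly via backward dynamic programming: set $\phi_n$ to be the concave envelope (in the first variable) of $h_n$, and recursively let $\phi_k$ be the concave envelope in its first entry of $h_k(\cdot)+\phi_{k+1}(y_k,y_k,\ldots)$. The concave-envelope operation is exactly the Legendre-type operation dual to optimizing over one-step martingale transport (distributions on $C$ with a prescribed mean), so each envelope step should correspond to the supremum over the conditional law of $X_{T_k}$ given its mean. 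This ties the minimization value to the maximization value one time-step at a time.

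The main obstacle, and where the feasibility hypothesis must do its work, is controlling the Lagrange multipliers $\lambda_k$ and guaranteeing that the constructed $\phi_k$ lie in $C_0(C^{d+1},\mathbb{R})$ (continuity, and the appropriate decay/affine-growth behavior) rather than merely being upper semicontinuous or improper. The affine-upper-bound assumption on $f_k$ and the components of $g_k$ is what prevents $\overline{p}=+\infty$ and should ensure the concave envelopes are real-valued and dominated by affine functions; I would lean on this to keep every $\phi_k$ finite and to pass to the continuous-function space. For the multipliers, I anticipate invoking a minimax or Lagrangian-duality theorem (e.g.\ a Sion-type minimax argument, or a Hahn--Banach separation in an appropriate space of measures) applied to the inner expectation constraint, and the assumed existence of a strictly feasible $\mathbb{Q}$ is precisely the constraint qualification needed to rule out a duality gap and to ensure the dual optimizer $\lambda$ is attained in $\mathbb{R}^{p}_{\geq 0}$ rather than escaping to infinity.

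The hard part will be making the interchange between the ``pointwise/backward'' concave-envelope construction and the ``global'' measure-theoretic optimization fully rigorous: one must verify that building $\phi_k$ by iterated envelopes genuinely produces a feasible point of \eqref{Equation:EquivalentProblem} whose objective value equals the optimal transport value, and simultaneously that this transport value is attained (or approximated) by an admissible martingale measure realizing the constraints with the right multipliers. I would expect to handle this by first proving the result for the unconstrained problem (no $g_k$, so $\lambda=0$) where the envelope characterization is cleanest, and then layering the Lagrangian relaxation of the expectation constraints on top, using strict feasibility to close the gap.
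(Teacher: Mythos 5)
Your proposal follows essentially the same route as the paper: the inequality $\overline{p}\leq\overline{d}$ via the backward telescoping argument combining $\lambda_k\geq 0$, conditional Jensen's inequality, and the martingale property, and the equality via first proving strong duality for the unconstrained problem (where the minimizing $\phi_k$ are exactly the iterated concave envelopes, matched by approximately optimal martingale measures built from one-step two-point transports) and then layering standard Lagrangian duality with Slater's condition on top. The paper organizes the weak direction as weak Lagrangian duality plus a separate unconstrained weak-duality lemma rather than your single merged verification, but the mechanism is identical, and your flagged concerns (continuity of the envelopes, approximation rather than attainment of the optimal measure) are precisely the points the paper handles via Rockafellar's concave-envelope approximation result and a finite-support Dirac construction.
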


We can interpret Theorem~\ref{Theorem:MainResult} as describing a duality relationship between the two problems, along with a sufficient condition for strong duality. If the strong duality relationship does not hold, then $\overline{d}$ can be interpreted as a super-replication price.

The idea of relating model-free or robust no-arbitrage price bounds to an infinite-dimensional linear programming problem is not new. Many authors have investigated duality relationships between no-arbitrage price bounds and semi-static super-hedging portfolios (see \cite{Neufeld2013,Acciaio2013,Nutz2014,Riedel2011,Deng2016}). However, we emphasize that the approach of this paper applies to many common derivatives, is easily implementable via a finite-dimensional approximation, and directly returns super-hedging portfolios and worst-case price dynamics.

The main technique of this paper is to re-write the maximization over martingale measures instead as a minimization problem. Intuitively, we view this as analogous to viewing a viscosity solution as the minimum viscosity super-solution, which in this case corresponds to computing concave envelopes. This is essentially an alternate perspective on ideas demonstrated in the recent paper by Kahal{\'e} \cite{Kahale2012}, in which the author casts the super-replication of several common exotic derivatives as a convex optimization problem whose objective function involves the computation of iterated concave envelopes. Rather than viewing the problem as a multi-stage optimization which requires specialized numerical routines, we demonstrate how to encode the same methodology in a single minimization problem. The natural finite-dimensional approximation of \eqref{Equation:EquivalentProblem} can be immediately solved by common software packages for linear programs (e.g. Mosek, Matlab, GLPK, et cetera).


In particular, if we let $\Lambda := \{y_1,\ldots,y_m\}$ be a choice of mesh for $C$ by $m$ grid-points whose convex hull contains $\{x_0,\ldots,x_{1-d}\}$, then we obtain a natural finite-dimensional approximation of \eqref{Equation:EquivalentProblem} as
\begin{equation}\label{Equation:ApproximateProblem}
\begin{array}{rcl}
\overline{d}_\Lambda := & \inf\limits_{\left(\lambda,\phi,h\right)\in\mathcal{A}_\Lambda} & \sum\limits_{i_0,\ldots,i_d=1}^m\gamma_{i_0\cdots i_d}\cdot \phi^1_{i_0\cdots i_d} \\
& \text{s.t.} & h^k_{i_0\cdots i_d} = f_k\left(y_{i_0},\ldots,y_{i_d}\right)+\lambda_k\cdot g_k\left(y_{i_0},\ldots,y_{i_d}\right)\text{ for each }k\in\{1,\ldots,n\} \\
&& \hspace{1cm}\text{ and all }1\leq i_0,\ldots,i_d\leq m\\
&& \phi^n_{i_0\cdots i_d} \geq h^n_{i_0\cdots i_d}\text{ for all }1\leq i_0,\ldots,i_d\leq m\\
&& \phi^k_{i_0\cdots i_d} \geq h^k_{i_0\cdots i_d} + \phi^{k+1}_{i_0 i_0\cdots i_{d-1}} \text{ for each }k\in\{1,\ldots,n-1\}\\
&&\hspace{1cm}\text{and all }1\leq i_0,\ldots,i_d\leq m\\
&& (y_{i_0-1}-y_{i_0})\phi^k_{(i_0+1) i_1\cdots i_d}+(y_{i_0+1}-y_{i_0-1})\phi^k_{i_0 i_1\cdots i_d}+(y_{i_0}-y_{i_0+1})\phi^k_{(i_0-1) i_1\cdots i_d} \geq 0\\
&&\hspace{1cm}\text{for all }k\in\{1,\ldots,n\}\text{, all }2\leq i_0\leq m-1\text{, and all }1\leq i_1,\ldots,i_d\leq m\\
&& \lambda_k \geq 0\text{ component-wise in }\mathbb{R}^{p_k}\text{ for each }k\in\{1,\ldots,n\},
\end{array}
\end{equation}
where $\gamma\in\mathbb{R}^{m\times\cdots\times m}$ is taken as an linear interpolation operator of $(x_0,x_0,\ldots,x_{i-1})$ corresponding to the choice of mesh $\Lambda$ and
\[\mathcal{A}_\Lambda := \mathbb{R}^{p_1}\times\cdots\times\mathbb{R}^{p_1}\times \left(\mathbb{R}^{m\times\cdots\times m}\right)^{2n} \simeq \mathbb{R}^{p+2n\times m^{d+1}}.\]

Despite being notationally complicated, the minimization problem \eqref{Equation:ApproximateProblem} is an easily-implementable linear program with $O\left(p + n\times m^{d+1}\right)$ unknowns and inequality constraints. Then, for any fixed $d$, this can be solved in polynomial time\footnote{We do not necessarily claim to obtain an algorithm which outperforms that provided in \cite{Kahale2012}. The main selling-point of our approach is the alternative conceptualization and ease of implementation using standard packages which run ``fast enough'' in practice.} with respect to $n$, $m$, and $p$ using standard algorithms \cite{Khachiyan1980,Karmarkar1984,Nesterov1994}.  Furthermore, it is a sparse linear program with $O\left(p\times n\times m^{d+1}\right)$ non-zero elements, so there are specialized algorithms with even faster performance \cite{Yen2015,Andersen2000}. We leave a complete analysis of convergence of $\overline{d}_\Lambda$ to $\overline{d}$ to interested researchers and instead choose to focus on Theorem~\ref{Theorem:MainResult} and practical applications.

\section{Some Concrete Examples}\label{Section:Examples}

In this section, we provide three concrete examples which illustrate how to apply the results of this paper to obtain model-free no-arbitrage upper bounds for exotic derivatives. The results of this paper are not general enough to encompass many path-dependent derivatives, but generally apply to those whose pay-off depends upon the current value of the underlying along with finitely-many previous values of the underlying.

Although the ideas in the follows sections can easily be applied to similar derivatives, we choose to include computations of upper bounds for forward-starting at-the-money call options, continuously-monitored variance swaps, and discretely-monitored gamma swaps.

\subsection{Forward-Starting Call Option}\label{Subsection:ForwardStartingCall}

We begin with a simple first example. We consider model-free no-arbitrage bounds on the price of a forward-starting at-the-money call option given bid-ask spreads for finitely-many call options at expiring on the two terminal dates. No-arbitrage price bounds for forward-starting options have been obtained in many different settings previously in the literature, such as \cite{Hobson2015,Kahale2012}.

For simplicity, we take the risk-free rate as zero in this example, so $X$ represents the underlying price process. The case of a non-zero deterministic risk-free rate can be covered immediately by re-interpreting $X$ as the discounted-price process of the underlying and modifying all pay-offs accordingly.

We write the model-free no-arbitrage upper bound in the following form:
\begin{equation}\nonumber
\begin{array}{cl}
\sup\limits_{\mathbb{Q}\in\mathcal{Q}} & \mathbb{E}^\mathbb{Q}\left[\left(X_{T_2}-X_{T_1}\right)^+\right] \\
\text{s.t.} & \text{Bid}_{1,\ell} \leq\mathbb{E}^\mathbb{Q}\left[\left(X_{T_1}-\text{Strike}_{1,\ell}\right)^+\right] \leq \text{Ask}_{1,\ell}\text{ for each }\ell\in\{1,\ldots,p_1\}\\
& \text{Bid}_{2,\ell} \leq\mathbb{E}^\mathbb{Q}\left[\left(X_{T_2}-\text{Strike}_{2,\ell}\right)^+\right] \leq \text{Ask}_{2,\ell}\text{ for each }\ell\in\{1,\ldots,p_2\},
\end{array}
\end{equation}
where $\text{Bid}_{k,\ell}$ and $\text{Ask}_{k,\ell}$ represent the market bid-ask spread of a call option with expiration $T_k$ and strike $\text{Strike}_{k,\ell}$ for each $k\in\{1,2\}$ and $\ell\in\{1,\ldots,p_k\}$. We take $C := [0,\infty)$ in this computation.

This immediately translates into the framework of \eqref{Equation:MainProblem} by taking $d=1$, $n=2$, $f_1(y_1,y_0) := 0$, $f_2(y_2,y_1) := (y_2-y_1)^+$, and
\[g_k(y_k,y_{k-1}) := \left(\begin{array}{c}
\left(y_k-\text{Strike}_{k,1}\right)^+ - \text{Bid}_{k,1} \\
\text{Ask}_{k,1} - \left(y_k-\text{Strike}_{k,1}\right)^+ \\
\vdots \\
\left(y_k-\text{Strike}_{k,p_k}\right)^+ - \text{Bid}_{k,p_k} \\
\text{Ask}_{k,p_k} - \left(y_k-\text{Strike}_{k,p_k}\right)^+
\end{array}\right)\]
for each $k\in\{1,2\}$. Of course, we can also obtain a lower bound by taking $\tilde{f}_k := -f_k$.

In the following, we consider the results of a numerical implementation of the corresponding finite-dimensional approximation given by \eqref{Equation:ApproximateProblem}. Here, we take $x_0 = \$100$, $T_1 = 1/6$, and $T_2 = 5/12$. We take $\text{Strike}_k = \{\$70,\ldots,\$130\}$ for each $k\in\{1,2\}$ and generate bid-ask spreads from the Black-Scholes pricing formula with $\sigma = 20\%$. Lastly, in these results, we take
\[\Lambda := \{\$0,\$10,\ldots,\$60,\$70,\$71,\ldots,\$129,\$130,\$140,\ldots,\$190,\$200,\$10000\}.\]
In Table~\ref{Table:ForwardStartingCallSuperHedge}, we illustrate the resulting static hedge positions corresponding to the super-replication strategy for an upper bound. As expected, the super-replicating strategy is long-volatility at $T_2$ and short volatility at $T_1$. The resulting no-arbitrage upper bound on the price of this forward-starting call option is $\$5.2708$, which is corroborated by the results in \cite{Kahale2012}. Similarly, in Table~\ref{Table:ForwardStartingCallSubHedge}, we illustrate the resulting static hedge positions corresponding to the sub-replication strategy for a lower bound. The resulting no-arbitrage lower bound on the price of this forward-starting call option is $\$1.9266$.

\begin{table}[h]
	\centering
	\begin{tabular}{ |c||c|c|c|c|c|c|c| }
		\hline
		\text{Strike} & \$70 & \$80 & \$90 & \$100 & \$110 & \$120 & \$130 \\ \hline\hline
		$\lambda_1$	& 0.0395 & -0.2400 & -0.5000 & -0.4000 & -0.5000 & 0.3179 & -0.0024 \\ \hline
		$\lambda_2$ & 0.4514 & 0.4800 & 0.4200 & 0.4800 & 0.4200 & 0.4800 & 0.2402 \\ \hline
	\end{tabular}
	\caption{Static positions in call options expiring at times $T_1=1/6$ and $T_2=5/12$, corresponding to a super-hedge of an at-the-money forward-starting call option. The corresponding super-replication value is $\$5.2708$.}
	\label{Table:ForwardStartingCallSuperHedge}
\end{table}

\begin{table}[h]
	\centering
	\begin{tabular}{ |c||c|c|c|c|c|c|c| }
		\hline
		\text{Strike} & \$70 & \$80 & \$90 & \$100 & \$110 & \$120 & \$130 \\ \hline\hline
		$\lambda_1$	& 1.2488 & 0.0010 & -0.0020 & -0.7990 & -0.4000 & 0.2000 & 0.0000 \\ \hline
		$\lambda_2$ & -1.2488 & -0.0010 & 0.0020 & 0.7990 & 0.4000 & -0.2000 & 0.0000 \\ \hline
	\end{tabular}
	\caption{Static positions in call options expiring at times $T_1=1/6$ and $T_2=5/12$, corresponding to a sub-hedge of an at-the-money forward-starting call option. The corresponding sub-replication value is $\$1.9266$.}
	\label{Table:ForwardStartingCallSubHedge}
\end{table}

\subsection{Continuously-Monitored Variance Swap}

Next we consider the the problem of obtaining model-free no-arbitrage bounds on the price of a continuously-monitored variance swap. It is well-known that if the underlying price process is a continuous semi-martingale and the risk-free rate is taken to be zero, then we can write
\[\langle \log X\rangle_T = 2\log x_0 + \int_0^T 2 X_t^{-1}\,dX_t - 2\log X_t\]
under any risk-neutral probability measure (see \cite{Carr2010}).

Taking $x_0=\$100$, without loss of generality, we can write the model-free no-arbitrage upper bound on the price of a continuously-monitored variance swap subject to the bid-ask spreads of finitely-many co-terminal call options in the following form:
\begin{equation}\nonumber
\begin{array}{cl}
\sup\limits_{\mathbb{Q}\in\mathcal{Q}} & \mathbb{E}^\mathbb{Q}\left[-2\log\left(X_{T_n}/100\right)\right] \\
\text{s.t.} & \text{Bid}_{k,\ell} \leq\mathbb{E}^\mathbb{Q}\left[\left(X_{T_k}-\text{Strike}_{k,\ell}\right)^+\right] \leq \text{Ask}_{k,\ell}\text{ for each }\ell\in\{1,\ldots,p_k\}\\
&\hspace{1cm}\text{and each }k\in\{1,\ldots,n\},
\end{array}
\end{equation}
where $\text{Bid}_{k,\ell}$ and $\text{Ask}_{k,\ell}$ represent the bid-ask spread of a call option with expiration $T_k$ and strike $\text{Strike}_{k,\ell}$. We take $C := (0,\infty)$ in this computation.

As before, this translates into the framework of \eqref{Equation:MainProblem} by taking $d=0$, $f_k(y_k) =0$ for each $k\in\{1,\ldots,n-1\}$, $f_n(y_n) = -2\log(y_n)$, and
\[g_k(y_k) := \left(\begin{array}{c}
\left(y_k-\text{Strike}_{k,1}\right)^+ - \text{Bid}_{k,1} \\
\text{Ask}_{k,1} - \left(y_k-\text{Strike}_{k,1}\right)^+ \\
\vdots \\
\left(y_k-\text{Strike}_{k,p_k}\right)^+ - \text{Bid}_{k,p_k} \\
\text{Ask}_{k,p_k} - \left(y_k-\text{Strike}_{k,p_k}\right)^+
\end{array}\right)\]
for each $k\in\{1,\ldots,n\}$. Of course, we can also obtain a lower bound by taking $\tilde{f}_n := -f_n$.

In the following, we consider the results of a numerical implementation of the corresponding finite-dimensional approximation given by \eqref{Equation:ApproximateProblem}. Here, we take $x_0 = \$100$, $n=2$, $T_1 = 1/6$, and $T_2 = 5/12$. We take $\text{Strike}_k = \{\$70,\ldots,\$130\}$ for each $k\in\{1,2\}$ and generate bid-ask spreads from the Black-Scholes pricing formula with $\sigma = 20\%$. Lastly, in these results, we take
\[\Lambda := \{\$1,\$10,\$20,\ldots,\$60,\$70,\$71,\ldots,\$129,\$130,\$140,\ldots,\$190,\$200,\$10000\}.\]
In Table~\ref{Table:VarianceSwapSuperHedge}, we illustrate the resulting static hedge positions corresponding to the super-replication strategy for an upper bound. As expected, the super-replicating strategy is long-volatility at $T_2$ and neutral volatility at $T_1$. Notice, for strikes $\$80$ through $\$120$, the super-hedge positions are approximately proportional to $1/K^2$, which matches the theoretical hedge position when all strikes available.

The resulting no-arbitrage upper bound on the price of this variance swap is $\$0.0208$, which may also be expressed in normalized volatility form as $\sqrt{T_2^{-1}\times 0.0208}\approx 22.3\%$. Similarly, in Table~\ref{Table:VarianceSwapSubHedge}, we illustrate the resulting static hedge positions corresponding to the sub-replication strategy for a lower bound. The resulting no-arbitrage lower bound on the price of this variance swap s $\$0.0156$, which can alternatively be expressed as $\sqrt{T_2^{-1}\times 0.0156}\approx 19.3\%$.

\begin{table}[h]
	\centering
	\begin{tabular}{ |c||c|c|c|c|c|c|c| }
		\hline
		\text{Strike} & \$70 & \$80 & \$90 & \$100 & \$110 & \$120 & \$130 \\ \hline\hline
		$\lambda_1$	& 0.0000 & 0.0000 & 0.0000 & 0.0000 & 0.0000 & 0.0000 & 0.0000 \\ \hline
		$\lambda_2$ & 0.0964 & 0.0031 & 0.0025 & 0.0020 & 0.0017 & 0.0014 & 0.0151 \\ \hline
	\end{tabular}
	\caption{Static positions in call options expiring at times $T_1=1/6$ and $T_2=5/12$, corresponding to a super-hedge of a continuously-monitored variance swap. The corresponding super-replication value is $\$0.0208$.}
	\label{Table:VarianceSwapSuperHedge}
\end{table}

\begin{table}[h]
	\centering
	\begin{tabular}{ |c||c|c|c|c|c|c|c| }
		\hline
		\text{Strike} & \$70 & \$80 & \$90 & \$100 & \$110 & \$120 & \$130 \\ \hline\hline
		$\lambda_1$	& 0.0292 & 0.0000 & 0.0000 & 0.0000 & 0.0000 & 0.0000 & 0.0000 \\ \hline
		$\lambda_2$ & -0.0251 & 0.0032 & 0.0024 & 0.0021 & 0.0016 & 0.0014 & 0.0013 \\ \hline
	\end{tabular}
	\caption{Static positions in call options expiring at times $T_1=1/6$ and $T_2=5/12$, corresponding to a sub-hedge of a continuously-monitored variance swap. The corresponding sub-replication value is $\$0.0156$.}
	\label{Table:VarianceSwapSubHedge}
\end{table}

\subsection{Discretely-Monitored Gamma Swap}

Finally, we consider an instance with $n>>1$. Here, we consider the problem of obtaining model-free no-arbitrage bounds on the price of a type of discretely-monitored gamma swap. The pay-off of a gamma swap is typically defined as
\begin{equation}\label{Equation:GammaSwapPayoff}
\sum\limits_{k=1}^n X_{T_k}\log\left(X_{T_k}/X_{T_{k-1}}\right)^2,
\end{equation}
where the extra term $X_{T_k}$ has been added to the pay-off of a discretely-monitored variance swap. This serves multiple purposes. For our purposes, it mainly serves to protect from crash risk in the pay-off without artificially putting a cap on the value of $X$. In practice, this modification is also useful for dispersion trading and expressing views on the volatility skew. For more on gamma swaps, see \cite{Lee2010}.

For the purposes of this paper, we need to slightly modify the definition above further. The pay-off \eqref{Equation:GammaSwapPayoff} is not bounded above by an affine function, so it is impossible to super-hedge with only call and put options. We modify the pay-off slightly to satisfy the desired property\footnote{We leave it to the interested reader to check that $(x\wedge y)\log(x/y)^2 \leq 4e^{-2}(x+y)$ for all $x,y\in(0,\infty)$.}, changing it to
\begin{equation}\nonumber
\sum\limits_{k=1}^n (X_{T_k}\wedge X_{T_{k-1}})\log\left(X_{T_k}/X_{T_{k-1}}\right)^2.
\end{equation}

We then write the model-free no-arbitrage upper bound in the following form:
\begin{equation}\nonumber
\begin{array}{cl}
\sup\limits_{\mathbb{Q}\in\mathcal{Q}} & \mathbb{E}^\mathbb{Q}\left[\sum\limits_{k=1}^n (X_{T_k}\wedge X_{T_{k-1}})\log\left(X_k/X_{k-1}\right)^2\right] \\
\text{s.t.} & \text{Bid}_{k,\ell} \leq\mathbb{E}^\mathbb{Q}\left[\left(X_{T_k}-\text{Strike}_{k,\ell}\right)^+\right] \leq \text{Ask}_{k,\ell}\text{ for each }\ell\in\{1,\ldots,p_k\}\\
&\hspace{1cm}\text{ and each }k\in\{1,\ldots,n\},
\end{array}
\end{equation}
where $\text{Bid}_{k,\ell}$ and $\text{Ask}_{k,\ell}$ represent the market bid-ask spread of a call option with expiration $T_k$ and strike $\text{Strike}_{k,\ell}$. We emphasize that, in practice, we will have $p_k=0$ for most $k\in\{1,\ldots,n\}$. We take $C := (0,\infty)$ in this computation.

This translates into the framework of \eqref{Equation:MainProblem} by taking $d=1$, $f_k(y_k,y_{k-1}) = (y_k\wedge y_{k-1})\log(y_k/y_{k-1})^2$, and
\[g_k(y_k,y_{k-1}) := \left(\begin{array}{c}
\left(y_k-\text{Strike}_{k,1}\right)^+ - \text{Bid}_{k,1} \\
\text{Ask}_{k,1} - \left(y_k-\text{Strike}_{k,1}\right)^+ \\
\vdots \\
\left(y_k-\text{Strike}_{k,p_k}\right)^+ - \text{Bid}_{k,p_k} \\
\text{Ask}_{k,p_k} - \left(y_k-\text{Strike}_{k,p_k}\right)^+
\end{array}\right)\]
for each $k\in\{1,\ldots,n\}$. Of course, we can also obtain a lower bound by taking $\tilde{f}_n := -f_n$.

In the following, we consider the results of a numerical implementation of the corresponding finite-dimensional approximation given by \eqref{Equation:ApproximateProblem}. Here, we take $x_0 = \$100$, $n=100$, and $T_k = k/240$ for each $k\in\{1,\ldots,100\}$. This is intended to approximate a five-month gamma swap with daily-monitoring\footnote{Here, we simplify to assume twenty equally-spaced business days in each month to avoid dealing with actual day count conventions and trading holiday calendars, although these complications can easily be added for practical applications.} We take $p_k=0$ for all $k$ except $k\in\{40,100\}$, where we have $p_k=7$. We take $\text{Strike}_k = \{\$70,\ldots,\$130\}$ for each $k\in\{40,100\}$ and generate bid-ask spreads from the Black-Scholes pricing formula with $\sigma = 20\%$. Lastly, in these results, we take
\[\Lambda := \{\$1,\$10,\$20,\ldots,\$60,\$70,\$71,\ldots,\$129,\$130,\$140,\ldots,\$190,\$200,\$10000\}.\]
In Table~\ref{Table:GammaSwapSuperHedge}, we illustrate the resulting static hedge positions corresponding to the super-replication strategy for an upper bound. As with the continuously-monitored variance swaps, most of the static hedging positions are placed with 5-month call options. We note the near-the-money hedge positions are approximately those of the variance swap scaled by $100$. The resulting no-arbitrage upper bound on the price of this gamma swap is $\$1.9389$. Similarly, in Table~\ref{Table:GammaSwapSubHedge}, we illustrate the resulting static hedge positions corresponding to the sub-replication strategy for a lower bound. The resulting no-arbitrage lower bound on the price of this gamma swap s $\$1.2443$.

\begin{table}[h]
	\centering
	\begin{tabular}{ |c||c|c|c|c|c|c|c| }
		\hline
		\text{Strike} & \$70 & \$80 & \$90 & \$100 & \$110 & \$120 & \$130 \\ \hline\hline
		$\lambda_1$	& 3.8765 & 0.0001 & 0.0000 & 0.0000 & 0.0000 & 0.0002 & 0.0010 \\ \hline
		$\lambda_2$ & 1.2818 & 0.2468 & 0.2196 & 0.1979 & 0.1800 & 0.1651 & 1.1993 \\ \hline
	\end{tabular}
	\caption{Static positions in call options expiring at times $T_1=1/6$ and $T_2=5/12$, corresponding to a super-hedge of a discretely-monitored gamma swap. The corresponding super-replication value is $\$1.9389$.}
	\label{Table:GammaSwapSuperHedge}
\end{table}

\begin{table}[h]
	\centering
	\begin{tabular}{ |c||c|c|c|c|c|c|c| }
		\hline
		\text{Strike} & \$70 & \$80 & \$90 & \$100 & \$110 & \$120 & \$130 \\ \hline\hline
		$\lambda_1$	& -15.7755 & -0.0005 & 0.0111 & 0.0069 & -0.0206 & -0.0234 & 0.0238 \\ \hline
		$\lambda_2$ & -0.9658 & 0.1504 & 0.1478 & 0.1947 & 0.1555 & 0.0686 & -0.4911 \\ \hline
	\end{tabular}
	\caption{Static positions in call options expiring at times $T_1=1/6$ and $T_2=5/12$, corresponding to a sub-hedge of a discretely-monitored gamma swap. The corresponding sub-replication value is $\$1.2443$.}
	\label{Table:GammaSwapSubHedge}
\end{table}

\section{Proof of Main Results}\label{Section:Proof}

In this section we consider a sequence of results which are used to prove Theorem~\ref{Theorem:MainResult}. There are essentially three main ideas in this section:
\begin{enumerate}
\item The expectation-constrained robust maximization problem can be related to an unconstrained robust maximization problem via standard Lagrangian duality theory, 
\item The solution of a unconstrained robust maximization problem can be written concretely in terms of iterated concave envelopes, and
\item The computation of iterated concave envelopes may be expressed as a single minimization problem.
\end{enumerate}
The second and third idea are contained in the analysis of duality for an unconstrained robust maximization problem, while the first will then be used in the proof of Theorem~\ref{Theorem:MainResult}.

\subsection{Weak Duality for an Unconstrained Robust Maximization Problem}

We start by considering an unconstrained version of the robust maximization problem in \eqref{Equation:MainProblem}. For fixed continuous functions $h_1,\ldots,h_n:\mathbb{R}^{d+1}\to\mathbb{R}$, we define
\begin{equation}\label{Equation:UnconstrainedProblem}
p^\star := \sup\limits_{\mathbb{Q}\in\mathcal{Q}} \mathbb{E}^\mathbb{Q}\left[ \sum\limits_{k=1}^n h_k\left(X_{T_k},\ldots,X_{T_{k-d}}\right)\right].
\end{equation}
As in the setup of the constrained maximization problem \eqref{Equation:MainProblem}, we make the assumption that each $h_1,\ldots,h_n$ may be bounded from above by an affine function.

The goal of this section is to relate $p^\star$ to the following minimization problem:
\begin{equation}\label{Equation:UnconstrainedDualProblem}
\begin{array}{rcl}
d^\star := & \inf\limits_{\phi\in C_0(C^{d+1},\mathbb{R})^n} & \phi_1(x_0,x_0,\ldots,x_{1-d}) \\
& \text{s.t.} & \phi_n \geq h_n \\
&& \phi_k(y_k,\ldots,y_{k-d}) \geq h_k(y_k,\ldots,y_{k-d}) + \phi_{k+1}(y_k,y_k,\ldots,y_{k-d+1})\\
&& \hspace{1cm}\text{for all }k\in\{1,\ldots,n-1\}\text{ and }(y_k,\ldots,y_{k-d})\in C^{d+1}\\
&& \phi_k\text{ is concave in its first entry for each }k\in\{1,\ldots,n\}.
\end{array}
\end{equation}

The intuition here is that \eqref{Equation:UnconstrainedDualProblem} encodes the computation of a sequence of iterated concave envelopes. Our goal is to eventually show that strong duality holds. That is, that $p^\star = d^\star$.

We start by showing that both $p^\star$ and $d^\star$ are finite.

\begin{prop}
Both $p^\star,d^\star<+\infty$.
\end{prop}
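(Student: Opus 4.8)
The plan is to prove the two finiteness claims separately, as they require different ideas. For $p^\star < +\infty$, the key observation is that each $h_k$ is bounded above by an affine function, say $h_k(y_k,\ldots,y_{k-d}) \leq a_k y_k + b_k$ for constants $a_k,b_k$ (where I absorb the dependence on earlier coordinates into the bound, using that an affine upper bound in all coordinates can be further dominated using the martingale structure). First I would fix any $\mathbb{Q} \in \mathcal{Q}$ and use the martingale property together with the tower rule: since $\{X_{T_j}\}$ is a $C$-valued martingale with $X_{T_0} = x_0$ and the initial values $x_0,\ldots,x_{1-d}$ are deterministic, we have $\mathbb{E}^\mathbb{Q}[X_{T_k}] = x_0$ for every $k \geq 0$. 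Summing the affine bounds and taking expectations then yields
\[
\mathbb{E}^\mathbb{Q}\left[\sum_{k=1}^n h_k(X_{T_k},\ldots,X_{T_{k-d}})\right] \leq \sum_{k=1}^n \left(a_k x_0 + b_k\right),
\]
which is a finite bound independent of $\mathbb{Q}$, so taking the supremum gives $p^\star < +\infty$. The main subtlety here is handling an affine upper bound that depends on all $d+1$ coordinates rather than just $y_k$; I would reduce to the single-coordinate case by noting that each coordinate $y_{k-j}$ equals some $X_{T_{k-j}}$ whose expectation is also $x_0$, so a general affine bound $\sum_j c_{k,j} y_{k-j} + b_k$ still has expectation $\left(\sum_j c_{k,j}\right) x_0 + b_k$.

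For $d^\star < +\infty$, the natural approach is to exhibit an explicit feasible point $\phi = (\phi_1,\ldots,\phi_n)$ for the minimization problem \eqref{Equation:UnconstrainedDualProblem}, since any feasible point gives an upper bound on the infimum. I would construct the $\phi_k$ backwards, starting from $\phi_n$ and working down to $\phi_1$. Using the affine upper bounds on the $h_k$, I would take each $\phi_k$ to be a suitable affine function of its first coordinate (affine functions are concave, satisfying the concavity constraint trivially), chosen large enough to dominate $h_k$ plus the contribution of $\phi_{k+1}$. Concretely, if $h_k(y_k,\ldots,y_{k-d}) \leq \alpha_k y_k + \beta_k$ and I have already built an affine $\phi_{k+1}(z_{k+1},\ldots) = A_{k+1} z_{k+1} + B_{k+1}$, then the required inequality $\phi_k(y_k,\ldots) \geq h_k + \phi_{k+1}(y_k,y_k,\ldots,y_{k-d+1})$ can be met by an affine $\phi_k$ because the right-hand side is itself bounded above by an affine function of $y_k$ (the substitution $y_k$ into the first argument of $\phi_{k+1}$ preserves affineness). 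This recursion terminates with a finite value $\phi_1(x_0,\ldots,x_{1-d})$, establishing $d^\star < +\infty$.

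The step I expect to be the main obstacle is verifying that membership in $C_0(C^{d+1},\mathbb{R})$ is genuinely achievable by the constructed $\phi_k$, because $C_0$ denotes continuous functions vanishing at infinity (or with some decay condition), and affine functions do \emph{not} vanish at infinity on an unbounded domain such as $C = [0,\infty)$. I would need to reconcile the affine construction with whatever precise meaning $C_0$ carries here; most likely the intended resolution is that $\phi_k$ must be continuous and bounded-below-compatible rather than literally vanishing at infinity, or that the affine functions should be truncated/modified outside a large compact set while retaining concavity in the first variable and the domination inequalities. If the decay requirement is essential, the fix is to replace the affine dominators by concave functions that agree with the affine bound on the relevant region and then decrease appropriately, taking care that concavity in the first entry and all the chained inequalities are preserved simultaneously — this coupling across the index $k$ is where the bookkeeping becomes delicate and where I would focus the careful argument.
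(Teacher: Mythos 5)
Your proposal takes essentially the same approach as the paper: affine upper bounds combined with the martingale property to get a $\mathbb{Q}$-independent bound for $p^\star<+\infty$, and a backward recursion of affine feasible functions for $d^\star<+\infty$ (the paper's own proof uses exactly such affine $\phi_k$'s, so your worry about decay at infinity is moot --- the paper treats $C_0(C^{d+1},\mathbb{R})$ as ordinary continuous functions). The only minor slip is that for coordinates with index $k-j\leq 0$ the expectation is the prescribed deterministic value $x_{k-j}$ rather than $x_0$ (the paper writes this as $x_{(k-\ell)\wedge 0}$), which does not affect the finiteness conclusion.
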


\begin{proof}

By assumption, each $h_1,\ldots,h_n$ is bounded from above by an affine function. Then there exists $\alpha\in\mathbb{R}^n$ and $\beta\in\mathbb{R}^{n\times(d+1)}$ such that
\[h_k(y_k,\ldots,y_{k-d}) \leq \alpha_k + \sum\limits_{\ell=0}^d \beta_{k,\ell}y_{k-\ell}\]
for each $k\in\{1,\ldots,n\}$.

Let $\mathbb{Q}\in\mathcal{Q}$ be any martingale measure for $X$. Then we can directly compute
\begin{eqnarray}
\mathbb{E}^\mathbb{Q}\left[\sum\limits_{k=1}^n h_k\left(X_{T_k},\ldots,X_{T_{k-d}}\right)\right] & \leq & \mathbb{E}^\mathbb{Q}\left[\sum\limits_{k=1}^n \left[\alpha_k + \sum\limits_{\ell=0}^d \beta_{k,\ell} X_{T_{k-\ell}}\right]\right] \nonumber\\
& = & \sum\limits_{k=1}^n\left[\alpha_k+\sum\limits_{\ell=0}^d \beta_{k,\ell} x_{(k-\ell)\wedge 0}\right].\nonumber
\end{eqnarray}
This upper-bound is independent of the choice of $\mathbb{Q}$, so we conclude
\[p^\star \leq \sum\limits_{k=1}^n\left[\alpha_k+\sum\limits_{\ell=0}^d \beta_{k,\ell} x_{(k-\ell)\wedge 0}\right] < +\infty.\]

Next, define $\phi_1,\ldots,\phi_n:\mathbb{R}^{d+1}\to\mathbb{R}$ recursively as follows: Let $\phi_n(y_n,\ldots,y_{n-d}) := \alpha_n + \sum_{\ell=0}^d \beta_{n,\ell} y_{n-\ell}$. For each $k\in\{1,\ldots,n-1\}$, let
\[\phi_k(y_k,\ldots,y_{k-d}) := \alpha_k + \sum_{\ell=0}^d \beta_{k,\ell} y_{k-\ell} + \phi_{k+1}(y_k,y_k,\ldots,y_{k-d+1}).\]
By construction, each $\phi_1,\ldots,\phi_n$ is affine (hence concave in the first coordinate) and satisfies the constraints of \eqref{Equation:UnconstrainedDualProblem}. Then $\phi_1,\ldots,\phi_n$ is an admissible choice of functions, so we conclude $d^\star < +\infty$. 

\end{proof}

\begin{prop}
Both $p^\star,d^\star > -\infty$.
\end{prop}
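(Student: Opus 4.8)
The plan is to show both $p^\star>-\infty$ and $d^\star>-\infty$ by exhibiting a single convenient martingale measure for the former and a weak-duality argument for the latter.

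First I would address $p^\star>-\infty$. Since $\mathcal{Q}$ is nonempty (it contains, for instance, the measure under which $X$ is constant after each prescribed value, or more precisely any measure under which $X_{T_k}=x_0$ for $k\geq 1$ almost surely, which is a legitimate $C$-valued martingale because $x_0\in C$), we may simply take one such $\mathbb{Q}\in\mathcal{Q}$ and note that
\[
p^\star \geq \mathbb{E}^\mathbb{Q}\left[\sum_{k=1}^n h_k\left(X_{T_k},\ldots,X_{T_{k-d}}\right)\right] = \sum_{k=1}^n h_k(x_0,\ldots,x_0),
\]
where the right-hand side is a finite real number because each $h_k$ is continuous and hence finite at the point $(x_0,\ldots,x_0)\in C^{d+1}$. (One must check the degenerate measure is admissible; since $X_{T_0}=x_0,\ldots,X_{T_{1-d}}=x_{1-d}$ are fixed and the martingale property with $X_{T_k}=x_0$ for $k\geq 1$ holds trivially, this is immediate.) This lower bound is a fixed finite constant, so $p^\star>-\infty$.

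Next I would handle $d^\star>-\infty$. The natural route is weak duality: any admissible $\phi$ for \eqref{Equation:UnconstrainedDualProblem} should satisfy $\phi_1(x_0,x_0,\ldots,x_{1-d})\geq p^\star$, and since $p^\star>-\infty$ we would conclude $d^\star\geq p^\star>-\infty$. Concretely, fix any admissible $\phi$ and any $\mathbb{Q}\in\mathcal{Q}$. Using the concavity of each $\phi_k$ in its first entry together with the martingale property, Jensen's inequality gives, conditioning on the information through time $T_{k-1}$,
\[
\mathbb{E}^\mathbb{Q}\left[\phi_k(X_{T_k},X_{T_{k-1}},\ldots,X_{T_{k-d}})\,\middle|\,\mathcal{F}_{k-1}\right] \leq \phi_k\!\left(\mathbb{E}^\mathbb{Q}[X_{T_k}\mid\mathcal{F}_{k-1}],X_{T_{k-1}},\ldots\right) = \phi_k(X_{T_{k-1}},X_{T_{k-1}},\ldots,X_{T_{k-d}}),
\]
where I used $\mathbb{E}^\mathbb{Q}[X_{T_k}\mid\mathcal{F}_{k-1}]=X_{T_{k-1}}$. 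Combining this with the recursive constraint $\phi_k \geq h_k + \phi_{k+1}(y_k,y_k,\ldots)$ and telescoping from $k=1$ to $n$ (with $\phi_n\geq h_n$ closing the induction), one obtains $\phi_1(x_0,\ldots,x_{1-d}) \geq \mathbb{E}^\mathbb{Q}\left[\sum_{k=1}^n h_k\right]$. Taking the supremum over $\mathbb{Q}$ yields $\phi_1(x_0,\ldots,x_{1-d})\geq p^\star$, and then the infimum over admissible $\phi$ gives $d^\star\geq p^\star>-\infty$.

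The main obstacle I anticipate is the integrability and measure-theoretic bookkeeping needed to justify the conditional Jensen step rigorously: one must ensure the expectations $\mathbb{E}^\mathbb{Q}[\phi_k(\cdots)]$ are well-defined and that the affine upper bound on each $h_k$ (and the recursively-built affine bounds on $\phi_k$ from the previous proposition) controls the negative tails so that no $-\infty$ arises spuriously in the telescoping. I would lean on the affine-bound assumption to dominate the integrands and on the fixed initial values $x_0,\ldots,x_{1-d}$ to anchor the telescoped inequality at a finite constant. The concavity-plus-martingale Jensen argument is exactly the ``iterated concave envelope'' intuition flagged in the text, so I expect the conceptual content to be clean; the care is entirely in the integrability justification.
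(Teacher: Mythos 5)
Your proof of $p^\star>-\infty$ is essentially the paper's own argument: take the trivial measure under which the path never moves, and bound $p^\star$ below by the payoff along that path. One small imprecision: the evaluation point is not $(x_0,\ldots,x_0)$ but $\left(x_{k\wedge 0},x_{(k-1)\wedge 0},\ldots,x_{(k-d)\wedge 0}\right)$, since for $k\leq d$ some of the lagged coordinates are the prescribed initial values $x_{-1},\ldots,x_{1-d}$, which need not equal $x_0$; this does not affect finiteness. For $d^\star>-\infty$, however, you take a genuinely different route. The paper never touches probability for this half: it evaluates the constraints of \eqref{Equation:UnconstrainedDualProblem} along the same deterministic ``stay at $x_0$'' path, proving by backwards induction that $\phi_\ell\left(x_{\ell\wedge 0},\ldots,x_{(\ell-d)\wedge 0}\right) \geq \sum_{k=\ell}^n h_k\left(x_{k\wedge 0},\ldots,x_{(k-d)\wedge 0}\right)$ for every admissible $\phi$, using only the pointwise inequalities $\phi_n\geq h_n$ and $\phi_k \geq h_k + \phi_{k+1}(y_k,y_k,\ldots)$ together with the observation that $\ell\wedge 0=(\ell+1)\wedge 0=0$ for $\ell\geq 1$; concavity, the martingale property, and Jensen's inequality are never invoked, so no integrability question can even arise. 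Your route instead proves weak duality $d^\star\geq p^\star$ (which is the paper's Lemma~\ref{Lemma:WeakDuality}, established \emph{after} this proposition) and combines it with the first half. That is logically sound and proves strictly more, but it imports exactly the measure-theoretic bookkeeping you flag as the main obstacle: the conditional Jensen step needs the positive part of $\phi_k(X_{T_k},\ldots)$ to be integrable, and this comes from a supporting-line bound supplied by concavity in the first argument (with measurable dependence of the supporting line on the conditioning variables), \emph{not} from the ``recursively-built affine bounds from the previous proposition'' --- those were particular admissible functions constructed there, not upper bounds for an arbitrary admissible $\phi$, whose constraints only bound it from below. In short: the paper's deterministic evaluation buys freedom from all integrability issues at the cost of proving only what is needed, while your approach front-loads the weak duality lemma (and its analytic care) that the paper defers to its own later proof.
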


\begin{proof}
The first inequality follows by taking $\mathbb{Q}\in\mathcal{Q}$ to be the trivial martingale measure under which $X_{T_k}=x_0$ for all $k\in\{1,\ldots,n\}$. Then we immediately compute
\begin{eqnarray}
p^\star & \geq & \mathbb{E}^\mathbb{Q}\left[\sum_{k=1}^n h_k\left(X_{T_k},\ldots,X_{T_{k-d}}\right)\right] \nonumber\\
& = & \sum_{k=1}^n h_k\left(x_{k\wedge 0}, x_{(k-1)\wedge 0},\ldots,x_{(k-d)\wedge 0}\right) > -\infty.\nonumber
\end{eqnarray}
For the second inequality, suppose that $\phi_1,\ldots,\phi_n:\mathbb{R}^d\to\mathbb{R}$ is any set of functions satisfying the constraints in \eqref{Equation:UnconstrainedDualProblem}. Then we claim that
\[\phi_\ell\left(x_{\ell\wedge 0},x_{(\ell-1)\wedge 0},\ldots,x_{(\ell-d)\wedge 0}\right) \geq \sum_{k=\ell}^n h_k\left(x_{k\wedge 0},x_{(k-1)\wedge 0},\ldots,x_{(k-d)\wedge 0}\right)\]
for each $\ell\in\{1,\ldots,n\}$.

The case $\ell=n$ follows immediately from the property $\phi_n \geq h_n$. Then suppose that, for some $\ell\in\{1,\ldots,n-1\}$, we know that
\[\phi_{\ell+1}\left(x_{(\ell+1)\wedge 0},x_{\ell\wedge 0},\ldots,x_{(\ell-d+1)\wedge 0}\right) \geq \sum_{k=\ell+1}^n h_k\left(x_{k\wedge 0},x_{(k-1)\wedge 0},\ldots,x_{(k-d)\wedge 0}\right).\]
Recall that $\phi_\ell(y_\ell,\ldots,y_{\ell-d})\geq h_\ell(y_\ell,\ldots,y_{\ell-d})+\phi_{\ell+1}(y_\ell,y_\ell,\ldots,y_{\ell-d+1})$ for all $(y_\ell,\ldots,y_{\ell-d})\in\mathbb{R}^{d+1}$. Then, in particular, we compute
\begin{eqnarray}
\phi_\ell(x_{\ell\wedge 0},x_{(\ell-1)\wedge 0},\ldots,x_{(\ell-d)\wedge 0}) & \geq & h_\ell(x_{\ell\wedge 0},\ldots,x_{(\ell-d)\wedge 0})+\phi_{\ell+1}(x_{\ell\wedge 0},x_{\ell\wedge 0},\ldots,x_{(\ell-d+1)\wedge 0}) \nonumber\\
& = & h_\ell(x_{\ell\wedge 0},\ldots,x_{(\ell-d)\wedge 0})+\phi_{\ell+1}(x_{(\ell+1)\wedge 0},x_{\ell\wedge 0},\ldots,x_{(\ell-d+1)\wedge 0}) \nonumber\\
& \geq & \sum_{k=\ell}^n h_k\left(x_{k\wedge 0},x_{(k-1)\wedge 0},\ldots,x_{(k-d)\wedge 0}\right),\nonumber
\end{eqnarray}
where the equality follows because $\ell\in\{1,\ldots,n-1\}$ implies $\ell\wedge 0=(\ell+1)\wedge 0 =0$. Then by backwards induction on $\ell$, the general claim holds.

In particular, the case $\ell=1$ implies that
\[\phi_1(x_0,x_0,\ldots,x_{1-d}) \geq \sum_{k=1}^n h_k\left(x_{k\wedge 0},x_{(k-1)\wedge 0},\ldots,x_{(k-d)\wedge 0}\right).\]
But because lower bound is independent of choice of $\phi_1,\ldots,\phi_n$, we conclude
\[d^\star \geq \sum_{k=1}^n h_k\left(x_{k\wedge 0},x_{(k-1)\wedge 0},\ldots,x_{(k-d)\wedge 0}\right) > -\infty.\]
\end{proof}

Now, we demonstrate a weak duality relationship between $p^\star$ and $d^\star$.

\begin{lem}\label{Lemma:WeakDuality}
$p^\star \leq d^\star$.
\end{lem}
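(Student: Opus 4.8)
The goal is to show that for any martingale measure $\mathbb{Q} \in \mathcal{Q}$ and any admissible $\phi = (\phi_1, \ldots, \phi_n)$ satisfying the constraints of the dual problem, we have
$$\mathbb{E}^\mathbb{Q}\left[\sum_{k=1}^n h_k(X_{T_k}, \ldots, X_{T_{k-d}})\right] \leq \phi_1(x_0, x_0, \ldots, x_{1-d}).$$
Taking sup over $\mathbb{Q}$ on the left and inf over $\phi$ on the right gives the result.

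**Key mechanism: telescoping with the martingale property**

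The crucial structure is the recursive constraint $\phi_k(y_k, \ldots, y_{k-d}) \geq h_k + \phi_{k+1}(y_k, y_k, \ldots, y_{k-d+1})$ combined with concavity in the first entry. The concavity is what lets the martingale property reduce expectations.

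Let me think about the right telescoping quantity. Define partial sums backward. The idea: establish, by backward induction on $\ell$, that
$$\mathbb{E}^\mathbb{Q}\left[\phi_\ell(X_{T_\ell}, \ldots, X_{T_{\ell-d}})\right] \geq \mathbb{E}^\mathbb{Q}\left[\sum_{k=\ell}^n h_k(X_{T_k}, \ldots, X_{T_{k-d}})\right].$$

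Wait — I need to be careful about what conditioning structure makes this work. Let me reconsider.

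The key step connecting $\phi_{k+1}(y_k, y_k, \ldots)$ (first two args equal) to an expectation at time $T_{k+1}$ is: by the martingale property, $\mathbb{E}^\mathbb{Q}[X_{T_{k+1}} \mid \mathcal{F}_{T_k}] = X_{T_k}$, and by concavity in the first entry plus Jensen's inequality (conditional),
$$\mathbb{E}^\mathbb{Q}\left[\phi_{k+1}(X_{T_{k+1}}, X_{T_k}, \ldots, X_{T_{k-d+1}}) \mid \mathcal{F}_{T_k}\right] \leq \phi_{k+1}(X_{T_k}, X_{T_k}, X_{T_{k-1}}, \ldots, X_{T_{k-d+1}}).$$

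This is the heart of it. Now let me structure the induction properly.

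---

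The plan is to prove the pointwise/measure-wise inequality $\mathbb{E}^\mathbb{Q}[\text{payoff}] \leq \phi_1(x_0, \ldots)$ for arbitrary fixed $\mathbb{Q} \in \mathcal{Q}$ and arbitrary admissible $\phi$, then optimize each side separately.

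The central tool combining the two dual constraints is a one-step estimate. Fix $k \in \{1, \ldots, n-1\}$. Since $\phi_{k+1}$ is concave in its first entry and $\mathbb{E}^\mathbb{Q}[X_{T_{k+1}} \mid \mathcal{F}_{T_k}] = X_{T_k}$ by the martingale property, conditional Jensen's inequality gives
$$\mathbb{E}^\mathbb{Q}\!\left[\phi_{k+1}(X_{T_{k+1}}, X_{T_k}, \ldots, X_{T_{k-d+1}}) \mid \mathcal{F}_{T_k}\right] \leq \phi_{k+1}(X_{T_k}, X_{T_k}, X_{T_{k-1}}, \ldots, X_{T_{k-d+1}}).$$
The recursive dual constraint $\phi_k(y_k, \ldots, y_{k-d}) \geq h_k(y_k, \ldots, y_{k-d}) + \phi_{k+1}(y_k, y_k, \ldots, y_{k-d+1})$ bounds the right-hand side above by $\phi_k(X_{T_k}, \ldots, X_{T_{k-d}}) - h_k(X_{T_k}, \ldots, X_{T_{k-d}})$. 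Chaining these yields the one-step inequality
$$\mathbb{E}^\mathbb{Q}\!\left[\phi_{k+1}(X_{T_{k+1}}, \ldots, X_{T_{k-d+1}}) \mid \mathcal{F}_{T_k}\right] + h_k(X_{T_k}, \ldots, X_{T_{k-d}}) \leq \phi_k(X_{T_k}, \ldots, X_{T_{k-d}}).$$

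With this in hand I would run a backward induction establishing, for each $\ell \in \{1, \ldots, n\}$,
$$\mathbb{E}^\mathbb{Q}\!\left[\sum_{k=\ell}^n h_k(X_{T_k}, \ldots, X_{T_{k-d}})\right] \leq \mathbb{E}^\mathbb{Q}\!\left[\phi_\ell(X_{T_\ell}, \ldots, X_{T_{\ell-d}})\right].$$
The base case $\ell = n$ is immediate from $\phi_n \geq h_n$. The inductive step takes conditional expectations of the one-step inequality, takes full expectations using the tower property, and adds the inductive hypothesis for $\ell+1$. Specializing to $\ell = 1$ and noting that $X_{T_0} = x_0, \ldots, X_{T_{1-d}} = x_{1-d}$ almost surely makes the right-hand side deterministically equal to $\phi_1(x_0, x_0, \ldots, x_{1-d})$, giving the desired measure-wise bound.

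The main obstacle, and the step I would treat most carefully, is the justification of conditional Jensen together with integrability bookkeeping. Concavity only controls $\phi_{k+1}$ from above, which is exactly the direction needed, but I must ensure all conditional expectations are well-defined: the affine upper bound on each $h_k$ (and the induced affine control on each $\phi_k$ from the finiteness proposition's construction) guarantees integrability of the positive parts, while the martingale property keeps $\mathbb{E}^\mathbb{Q}[X_{T_k}]$ finite, so Jensen applies and no $+\infty - \infty$ ambiguity arises. Once the measure-wise inequality holds for every admissible pair, taking the supremum over $\mathbb{Q} \in \mathcal{Q}$ and the infimum over admissible $\phi$ yields $p^\star \leq d^\star$.
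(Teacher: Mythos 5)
Your proof follows essentially the same route as the paper's: a backward induction that combines the recursive dual constraint, concavity of $\phi_{k+1}$ in its first entry via conditional Jensen's inequality, and the martingale property, followed by optimizing over $\mathbb{Q}$ and $\phi$ separately. Your one-step estimate and the inductive step are correct as stated, and your attention to integrability (which the paper passes over silently) is a welcome addition.

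However, there is a genuine slip at the final step. Your induction invariant is the ``uncollapsed'' form
\[
\mathbb{E}^\mathbb{Q}\Bigl[\textstyle\sum_{k=\ell}^n h_k(X_{T_k},\ldots,X_{T_{k-d}})\Bigr] \leq \mathbb{E}^\mathbb{Q}\bigl[\phi_\ell(X_{T_\ell},X_{T_{\ell-1}},\ldots,X_{T_{\ell-d}})\bigr],
\]
whose first argument sits at time $T_\ell$. Specializing to $\ell=1$ therefore yields the bound $\mathbb{E}^\mathbb{Q}\bigl[\phi_1(X_{T_1},X_{T_0},\ldots,X_{T_{1-d}})\bigr]$, and your claim that this is ``deterministically equal to $\phi_1(x_0,x_0,\ldots,x_{1-d})$'' is false: the conditions $X_{T_0}=x_0,\ldots,X_{T_{1-d}}=x_{1-d}$ pin down every argument except the first, and $X_{T_1}$ is genuinely random. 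You need one further application of exactly the mechanism you already isolated: by concavity of $\phi_1$ in its first entry, conditional Jensen, and $\mathbb{E}^\mathbb{Q}[X_{T_1}\mid\mathcal{F}_{T_0}]=X_{T_0}=x_0$ almost surely,
\[
\mathbb{E}^\mathbb{Q}\bigl[\phi_1(X_{T_1},x_0,\ldots,x_{1-d})\bigr] \leq \phi_1\bigl(\mathbb{E}^\mathbb{Q}[X_{T_1}],x_0,\ldots,x_{1-d}\bigr) = \phi_1(x_0,x_0,\ldots,x_{1-d}).
\]
The paper sidesteps this by carrying the ``collapsed'' invariant $\mathbb{E}^\mathbb{Q}\bigl[\phi_\ell(X_{T_{\ell-1}},X_{T_{\ell-1}},\ldots,X_{T_{\ell-d}})\bigr]$, in which Jensen has already been applied at each stage, so that at $\ell=1$ every argument is deterministic. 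With this one extra line (or by restating your invariant in collapsed form) your argument is complete.
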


\begin{proof}

We know $-\infty < d^\star < +\infty$, so for any $\epsilon>0$ there exists functions $\phi_1,\ldots,\phi_n:\mathbb{R}^d\to\mathbb{R}$ which are admissible for \eqref{Equation:UnconstrainedDualProblem} and satisfy
\[d^\star + \epsilon \geq \phi_1(x_0,x_0,\ldots,x_{1-d}).\]

Let $\mathbb{Q}\in\mathcal{Q}$ be an arbitrary martingale measure for $X$. We first claim that
\[\mathbb{E}^\mathbb{Q}\left[\sum_{k=\ell}^n h_k\left(X_{T_k},X_{T_{k-1}},\ldots,X_{T_{k-d}}\right)\right] \leq \mathbb{E}^\mathbb{Q}\left[\phi_\ell\left(X_{T_{\ell-1}},X_{T_{\ell-1}},\ldots,X_{T_{\ell-d}}\right)\right]\]
for each $\ell\in\{1,\ldots,n\}$.

The case $\ell=n$ follows because $\phi_n\geq h_n$ and $\phi_n$ is concave in its first entry. Then we can compute
\begin{eqnarray}
\mathbb{E}^\mathbb{Q}\left[h_n\left(X_{T_n},X_{T_{n-1}},\ldots,X_{T_{n-d}}\right)\right] & \leq & \mathbb{E}^\mathbb{Q}\left[\phi_n\left(X_{T_n},X_{T_{n-1}},\ldots,X_{T_{n-d}}\right)\right] \nonumber\\
& \leq & \mathbb{E}^\mathbb{Q}\left[\phi_n\left(\mathbb{E}^\mathbb{Q}\left[X_{T_n}\mid\mathcal{F}_{T_{n-1}}\right],X_{T_{n-1}},\ldots,X_{T_{n-d}}\right)\right] \nonumber\\
& = & \mathbb{E}^\mathbb{Q}\left[\phi_n\left(X_{T_{n-1}},X_{T_{n-1}},\ldots,X_{T_{n-d}}\right)\right]. \nonumber
\end{eqnarray}
In the second inequality, we applied Jensen’s inequality, and in the following equality, we applied the martingale property of $X$ under $\mathbb{Q}$.

Now, suppose that for some $\ell\in\{1,\ldots,n-1\}$ we know
\[\mathbb{E}^\mathbb{Q}\left[\sum_{k=\ell+1}^n h_k\left(X_{T_k},X_{T_{k-1}},\ldots,X_{T_{k-d}}\right)\right] \leq \mathbb{E}^\mathbb{Q}\left[\phi_{\ell+1}\left(X_{T_{\ell+1}},X_{T_{\ell+1}},\ldots,X_{T_{\ell+1-d}}\right)\right].\]
Recall that $\phi_\ell$ is concave in its first entry and that $\phi_\ell(y_\ell,y_{\ell-1},\ldots,y_{\ell-d})\geq h_\ell(y_\ell,y_{\ell-1},\ldots,y_{\ell-d})+\phi_{\ell+1}(y_\ell,y_\ell,\ldots,y_{\ell-d+1})$ for all $(y_\ell,\ldots,y_{\ell-d})\in\mathbb{R}^{d+1}$. Then applying the same logic as before, we can compute
\begin{eqnarray}
\mathbb{E}^\mathbb{Q}\left[\sum_{k=\ell}^n h_k\left(X_{T_k},X_{T_{k-1}},\ldots,X_{T_{k-d}}\right)\right] & \leq & \mathbb{E}^\mathbb{Q}\left[h_\ell(X_{T_\ell},X_{T_{\ell-1}},\ldots,X_{T_{\ell-d}})\right. \nonumber\\
& & \hspace{2cm} \left. + \phi_{\ell+1}\left(X_{T_{\ell+1}},X_{T_{\ell+1}},\ldots,X_{T_{\ell+1-d}}\right) \right] \nonumber\\
& \leq & \mathbb{E}^\mathbb{Q}\left[\phi_\ell\left(X_{T_\ell},X_{T_{\ell-1}},\ldots,X_{T_{\ell-d}}\right)\right] \nonumber\\
& \leq & \mathbb{E}^\mathbb{Q}\left[\phi_\ell\left(\mathbb{E}^\mathbb{Q}\left[X_{T_\ell}\mid\mathcal{F}_{T_{\ell-1}}\right],X_{T_{\ell-1}},\ldots,X_{T_{\ell-d}}\right)\right] \nonumber\\
& = & \mathbb{E}^\mathbb{Q}\left[\phi_\ell\left(X_{T_{\ell-1}},X_{T_{\ell-1}},\ldots,X_{T_{\ell-d}}\right)\right]. \nonumber
\end{eqnarray}
Then the general statement holds by backwards induction on $\ell$.

Using the $\ell=1$ case, we conclude
\begin{eqnarray}
\mathbb{E}^\mathbb{Q}\left[\sum_{k=1}^n h_k\left(X_{T_k},X_{T_{k-1}},\ldots,X_{T_{k-d}}\right)\right] & \leq & \mathbb{E}^\mathbb{Q}\left[ \phi_1\left(X_{T_0},X_{T_0},\ldots,X_{T_{1-d}}\right) \right] \nonumber\\
& = & \phi_1(x_0,x_0,\ldots,x_{1-d})\nonumber\\
& \leq & d^\star + \epsilon.\nonumber
\end{eqnarray}
However, because $\mathbb{Q}$ was arbitrary and the upper bound is independent of $\mathbb{Q}$, we conclude
\[p^\star \leq d^\star - \epsilon.\]
Because $\epsilon>0$ was arbitrary, the lemma follows.

\end{proof}

\subsection{Strong Duality for an Unconstrained Robust Maximization Problem}

Now we work towards a reverse inequality between $p^\star$ and $d^\star$. We also aim to make clear the relationship between $d^\star$ and a computation of iterated concave envelopes. To this end, we define a sequence of functions $\phi^\star_1,\ldots,\phi^\star_n : \mathbb{R}^{d+1}\to\mathbb{R}$ via the following:
\begin{itemize}
\item The map $y_n\mapsto\phi^\star_n(y_n,y_{n-1},\ldots,y_{n-d})$ is the concave envelope of $y_n\mapsto h_n(y_n,y_{n-1},\ldots,y_{n-d})$,
\item For each $k\in\{1,\ldots,n-1\}$, the map $y_k\mapsto\phi^\star_k(y_k,y_{k-1},\ldots,y_{k-d})$ is the concave envelope of $y_k\mapsto h_k(y_k,y_{k-1},\ldots,y_{k-d}) + \phi^\star_{k+1}(y_k,y_k,\ldots,y_{k-d+1})$.
\end{itemize}
These functions may be be infinite-valued in principle, but we note the assumption on each $h_k$ being bounded above by an affine function is enough to guarantee each is finite-valued.

Before stating the desired lemma, we first recall the following important result about concave envelopes:

\begin{lem}\label{Lemma:ConcaveEnvelopeApproximation}
Fix $\phi:\mathbb{R}\to\mathbb{R}$ and let $\hat\phi:\mathbb{R}\to\mathbb{R}$ denote the concave envelope of $\phi$. For any $y\in\mathbb{R}$ such that $\hat\phi(y)<+\infty$ and any $\epsilon>0$, there exists $p\in[0,1]$ and $z_1,z_2\in\mathbb{R}$ such that
\[y = p\,z_1 + (1-p)\,z_2\]
such that
\[\hat\phi(y) - \epsilon \leq p\,\phi(z_1) + (1-p)\,\phi(z_2).\]
\end{lem}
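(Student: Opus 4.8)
The plan is to prove this through the classical supremum representation of the concave envelope, followed by a reduction to two points using Carath\'eodory's theorem in the plane. The slack $\epsilon$ enters precisely because this supremum need not be attained; once we pass to a finite convex combination that is within $\epsilon$, the reduction to two points will be exact.

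First I would recall (or verify) that for every $y$ with $\hat\phi(y)<+\infty$ the concave envelope admits the representation
\[\hat\phi(y) = \sup\left\{\sum_{i=1}^N \lambda_i\,\phi(z_i) : N\in\mathbb{N},\ \lambda_i\geq 0,\ \sum_{i=1}^N\lambda_i = 1,\ \sum_{i=1}^N\lambda_i z_i = y\right\}.\]
Calling the right-hand side $\psi(y)$, the three checks that $\psi=\hat\phi$ are routine and I would only sketch them: $\psi\geq\phi$ (take $N=1$, $z_1=y$); $\psi$ is concave (glue admissible combinations for $y_1,y_2$ with weights $t,1-t$ to get one for $ty_1+(1-t)y_2$); and every concave $g\geq\phi$ satisfies $g(y)=g\big(\sum\lambda_i z_i\big)\geq\sum\lambda_i g(z_i)\geq\sum\lambda_i\phi(z_i)$ by Jensen's inequality, so $g\geq\psi$. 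Hence $\psi$ is the least concave majorant of $\phi$.

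Next I would fix $\epsilon>0$ and use the representation to pick $N$, points $w_1,\ldots,w_N$, and weights $\lambda_1,\ldots,\lambda_N\geq 0$ with $\sum\lambda_i=1$, $\sum\lambda_i w_i=y$, and $\sum_i\lambda_i\phi(w_i)\geq\hat\phi(y)-\epsilon$. Form the finite set $S:=\{(w_i,\phi(w_i))\}_{i=1}^N\subset\mathbb{R}^2$ and its convex hull $P$, a (possibly degenerate) polygon. The chosen combination exhibits a point $(y,v)\in P$ with $v\geq\hat\phi(y)-\epsilon$. Let $b:=\max\{t:(y,t)\in P\}$ be the top of the vertical slice $P\cap(\{y\}\times\mathbb{R})$; then $b\geq v$, and $(y,b)$ must lie on $\partial P$, since otherwise $(y,b+\delta)\in P$ for small $\delta>0$, contradicting maximality. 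A boundary point of a planar convex hull lies on a single edge (or is a vertex), so $(y,b)$ is a convex combination of at most two vertices of $P$, each of which belongs to $S$; write $(y,b)=p(w_{j_1},\phi(w_{j_1}))+(1-p)(w_{j_2},\phi(w_{j_2}))$ with $p\in[0,1]$. Setting $z_1:=w_{j_1}$ and $z_2:=w_{j_2}$ yields $y=p\,z_1+(1-p)\,z_2$ and $p\,\phi(z_1)+(1-p)\,\phi(z_2)=b\geq v\geq\hat\phi(y)-\epsilon$, as desired (when $(y,b)$ is a vertex one simply takes $z_1=z_2$).

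The only genuine obstacle is the first step: one must have the supremum representation in hand and recognize that the supremum may fail to be a maximum, which is exactly what forces the $\epsilon$-slack into the statement. The dimensional reduction to two points, by contrast, is exact and is just Carath\'eodory's theorem on the boundary of a two-dimensional convex hull. I expect no further difficulty, since everything takes place in $\mathbb{R}^2$ and the hypothesis $\hat\phi(y)<+\infty$ keeps all the quantities involved finite.
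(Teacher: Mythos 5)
Your proof is correct, but note that the paper does not actually supply an argument for this lemma: it disposes of it with a citation to Corollary 17.1.5 of Rockafellar's \emph{Convex Analysis}, which is precisely the Carath\'eodory-type representation of the convex (here concave) hull of a function --- in $\mathbb{R}^n$ the envelope is the supremum over $(n+1)$-point convex combinations, so on $\mathbb{R}$ two points suffice, and that supremum need not be attained, which is exactly why the $\epsilon$-slack appears in the statement. What you have written is, in effect, a complete elementary proof of the cited result in the one-dimensional case: first the least-concave-majorant characterization via the supremum over arbitrary finite convex combinations (your three routine checks do establish that this supremum is concave, majorizes $\phi$, and is dominated by every concave majorant, hence equals $\hat\phi$), and then the planar Carath\'eodory reduction, carried out at the top point $(y,b)$ of the vertical slice of the hull of the finitely many points $(w_i,\phi(w_i))$, so that passing from $N$ points to two loses nothing beyond the $\epsilon$ already conceded. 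The two routes thus have the same mathematical content; the citation buys brevity and the general $n$-dimensional statement, while your argument buys self-containedness at the cost of half a page. Your treatment of the degenerate configurations (collinear points, $(y,b)$ a vertex, taking $z_1=z_2$) is sound, and the existence of the maximum $b$ is justified, as you use it, by compactness of the convex hull of a finite set.
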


\begin{proof}
See Corollary 17.1.5 in \cite{Rockafellar1970}.
\end{proof}

This result will be used to construct approximate martingale measures which relate to the functions $\phi^\star_1,\ldots,\phi^\star_n$. With this in hand, we consider the following:

\begin{lem}\label{Lemma:StrongDuality}
$d^\star \leq p^\star$.
\end{lem}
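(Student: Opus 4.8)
The plan is to prove $d^\star\le p^\star$ in two pieces. First I would show that the iterated concave envelopes $\phi^\star_1,\ldots,\phi^\star_n$ are feasible for \eqref{Equation:UnconstrainedDualProblem}, which immediately gives $d^\star\le\phi^\star_1(x_0,x_0,\ldots,x_{1-d})$. Feasibility is nearly automatic from the construction: each $\phi^\star_k$ is concave in its first entry because it is a concave envelope in that variable, and, being an envelope, it dominates the function it is built from, so $\phi^\star_n\ge h_n$ and $\phi^\star_k(y_k,\ldots,y_{k-d})\ge h_k(y_k,\ldots,y_{k-d})+\phi^\star_{k+1}(y_k,y_k,\ldots,y_{k-d+1})$, which are exactly the constraints of \eqref{Equation:UnconstrainedDualProblem}. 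Finiteness is the already-noted consequence of the affine upper bound on each $h_k$, and I would confirm membership in $C_0(C^{d+1},\mathbb{R})$ by recalling that a finite concave function of one variable is continuous on the interior of its domain, together with continuity of the envelope operation in the remaining (parameter) coordinates; the only delicate point is the boundary of $C$ when $C$ is not open, which can be handled by a routine approximation.

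For the reverse inequality I would construct near-optimal martingale measures witnessing $\phi^\star_1(x_0,\ldots)\le p^\star$. Fix $\epsilon>0$, set $\epsilon_k:=\epsilon/n$, and build a finitely supported $\mathbb{Q}\in\mathcal{Q}$ by a forward binary recursion driven by Lemma~\ref{Lemma:ConcaveEnvelopeApproximation}. At the root I apply that lemma to the one-dimensional slice $y_1\mapsto\phi^\star_1(y_1,x_0,\ldots,x_{1-d})$ at the point $x_0$, obtaining weights $p,1-p$ and values $z_1,z_2$ with $x_0=p\,z_1+(1-p)\,z_2$ — which is exactly the martingale condition $\mathbb{E}^\mathbb{Q}[X_{T_1}\mid\mathcal{F}_{T_0}]=x_0$ — together with an $\epsilon_1$-approximation of $\phi^\star_1(x_0,\ldots)$ by the $p$-weighted values of $h_1+\phi^\star_2(\cdot,\cdot,\ldots)$. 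Conditioning on the realized value of $X_{T_1}$ I iterate: at a node at which $X_{T_{\ell-1}}$ has been fixed, I apply Lemma~\ref{Lemma:ConcaveEnvelopeApproximation} to $y_\ell\mapsto\phi^\star_\ell(y_\ell,X_{T_{\ell-1}},\ldots,X_{T_{\ell-d}})$ at the point $X_{T_{\ell-1}}$ to split $X_{T_\ell}$ into two atoms with conditional mean $X_{T_{\ell-1}}$, again up to error $\epsilon_\ell$. This yields a genuine martingale supported on at most $2^n$ paths.

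The errors are controlled by backward induction. Writing $E_\ell:=\sum_{j=\ell}^n\epsilon_j$, I would prove
\[\mathbb{E}^\mathbb{Q}\!\left[\sum_{k=\ell}^n h_k(X_{T_k},\ldots,X_{T_{k-d}})\;\middle|\;\mathcal{F}_{T_{\ell-1}}\right]\ \ge\ \phi^\star_\ell(X_{T_{\ell-1}},X_{T_{\ell-1}},\ldots,X_{T_{\ell-d}})-E_\ell.\]
The base case $\ell=n$ is precisely the local approximation at the terminal nodes. For the inductive step, the local approximation at a level-$\ell$ node bounds $\phi^\star_\ell(X_{T_{\ell-1}},\ldots)-\epsilon_\ell$ by the conditional expectation of $h_\ell(X_{T_\ell},\ldots)+\phi^\star_{\ell+1}(X_{T_\ell},X_{T_\ell},\ldots)$; since $\phi^\star_{\ell+1}(X_{T_\ell},X_{T_\ell},\ldots)$ is exactly the target value at the child node, the inductive hypothesis applies after a further conditioning, and because $E_{\ell+1}$ is deterministic it passes through the tower property, giving $E_\ell=\epsilon_\ell+E_{\ell+1}$. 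At $\ell=1$ this reads $\mathbb{E}^\mathbb{Q}[\sum_{k=1}^n h_k]\ge\phi^\star_1(x_0,x_0,\ldots,x_{1-d})-\epsilon$, so $p^\star\ge\phi^\star_1(x_0,\ldots)-\epsilon$; letting $\epsilon\to0$ and combining with feasibility gives $d^\star\le\phi^\star_1(x_0,\ldots)\le p^\star$.

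The main obstacle is ensuring the construction actually lands in $\mathcal{Q}$: the split points must stay in $C$ so that $X$ is genuinely $C$-valued. I would guarantee this by taking all the concave envelopes relative to $C$ (equivalently, extending the defining functions by $-\infty$ outside $C$), so that the effective domain, and hence the support of every two-point split, lies in $C$; the only real fuss is at the boundary when $C$ is not open. The other thing to get right is the error bookkeeping across the exponentially large tree, but the backward induction above handles it cleanly precisely because the per-node tolerances $\epsilon_k$ are deterministic and simply add, rather than compounding.
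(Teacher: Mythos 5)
Your proposal is correct and follows essentially the same route as the paper's proof: both reduce the claim to $d^\star \leq \phi^\star_1(x_0,\ldots,x_{1-d}) \leq p^\star$, where the first inequality comes from feasibility of the iterated concave envelopes and the second from a finitely-supported approximate-optimal martingale measure built by forward binary splits via Lemma~\ref{Lemma:ConcaveEnvelopeApproximation}, with errors controlled by backward induction (your per-step tolerance $\epsilon/n$ versus the paper's $\epsilon$ with final bound $n\epsilon$ is immaterial). If anything, you are slightly more explicit than the paper on two points it glosses over --- verifying the envelopes' feasibility/continuity for \eqref{Equation:UnconstrainedDualProblem} and keeping the split points inside $C$ --- while the paper is more explicit about why no measurable selection argument is needed.
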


\begin{proof}

We proceed in three steps.

\begin{enumerate}

\item Fix an arbitrary $\epsilon>0$. Then we claim we can construct a martingale measure $\mathbb{Q}\in\mathcal{Q}$ under which
\[\phi^\star_1(x_0,x_0,\ldots,x_{1-d}) \leq \mathbb{E}^\mathbb{Q}\left[\sum_{k=1}^n h_k\left(X_{T_k},\ldots,X_{T_{k-d}}\right)\right] + n\epsilon.\]
Suppose first that we can construct such a measure. Then this would imply
\[d^\star \leq \phi^\star_1(x_0,x_0,\ldots,x_{1-d}) \leq p^\star + n\epsilon,\]
and because $\epsilon>0$ is arbitrary, the lemma would follow.

\item We now construct such a measure $\mathbb{Q}$ via a sequence of transition measures. To start, we specify that the joint measure of $(X_{T_0},\ldots,X_{T_{1-d}})$ is a Dirac measure centered at $(x_0,\ldots,x_{1-d})$.

We then define $\mathbb{Q}$ recursively as follows:

For each $k\in\{1,\ldots,n-1\}$, we define the measure of $X_{T_k}$ conditional on $\mathcal{F}_{T_{k-1}}$ to be the sum of two Dirac measures centered at $z_1$ and $z_2$, with probability $p_1$ and $p_2$ respectively, such that
\[p_1\,z_1+p_2\,z_2 = X_{T_{k-1}}\]
and
\[\phi^\star_k\left(X_{T_{k-1}},X_{T_{k-1}},\ldots,X_{T_{k-d+1}}\right) \leq \epsilon + \sum_{i=1}^2 p_i\,\left[h_k\left(z_i,X_{T_{k-1}},\ldots,X_{T_{k-d+1}}\right)+\phi^\star_{k+1}\left(z_i,z_i,\ldots,X_{T_{k-d+1}}\right)\right].\]
Lastly, we define the measure of $X_{T_n}$ conditional on $\mathcal{F}_{T_{n-1}}$ to be the sum of two Dirac measures centered at $z_1$ and $z_2$ with probability $p_1$ and $p_2$ respectively, such that
\[p_1\,z_1+p_2\,z_2 = X_{T_{n-1}}\]
and
\[\phi^\star_n\left(X_{T_{n-1}},X_{T_{n-1}},\ldots,X_{T_{n-d+1}}\right) \leq \epsilon + \sum_{i=1}^2 p_i\,h_n\left(z_i,X_{T_{n-1}},\ldots,X_{T_{n-d+1}}\right).\]

This makes sense for fixed $(X_{T_{k-1}},\ldots,X_{T+{1-d}})$ using Lemma~\ref{Lemma:ConcaveEnvelopeApproximation}, but, at first glance, could require measurable selection arguments for the resulting measure $\mathbb{Q}$ to be well-defined. We claim this is not the case, but first let us consider why such a measure $\mathbb{Q}$ would satisfy the desired inequality.

This process defines a martingale measure for $X$ by definition as
\[\mathbb{E}^\mathbb{Q}\left[X_{T_k}\mid\mathcal{F}_{T_{k-1}}\right] = p_1\,z_1+p_2\,z_2 = X_{T_{k-1}}.\]
We next claim that
\[\phi^\star_\ell\left(X_{T_{\ell-1}},X_{T_{\ell-1}},\ldots,X_{T_{\ell-d+1}}\right) \leq (n-\ell +1)\epsilon + \mathbb{E}^\mathbb{Q}\left[\sum_{k=\ell}^n h_k\left(X_{T_k},\ldots,X_{T_{k-d}}\right)\mid\mathcal{F}_{T_{\ell-1}}\right],\]
almost-surely, for each $\ell\in\{1,\ldots,n\}$.

The case $\ell=n$ literally follows from the definition of $\mathbb{Q}$. Then suppose that for some $\ell\in\{1,\ldots,n-1\}$ we know that
\[\mathbb{E}^\mathbb{Q}\left[\phi^\star_{\ell+1}\left(X_{T_{\ell}},X_{T_{\ell}},\ldots,X_{T_{\ell-d+2}}\right)\right] \leq (n-\ell)\epsilon + \mathbb{E}^\mathbb{Q}\left[\sum_{k=\ell+1}^n h_k\left(X_{T_k},\ldots,X_{T_{k-d}}\right)\mid\mathcal{F}_{T_\ell}\right],\]
almost-surely. Using this and the definition of $\mathbb{Q}$, we compute
\begin{eqnarray}
\phi^\star_\ell\left(X_{T_{\ell-1}},X_{T_{\ell-1}},\ldots,X_{T_{\ell-d+1}}\right) & \leq & \epsilon + \mathbb{E}^\mathbb{Q}\left[h_\ell\left(X_{T_\ell},\ldots,X_{T_{\ell-d}}\right)\mid\mathcal{F}_{T_{\ell-1}}\right] \nonumber\\
& & \hspace{1cm} +\mathbb{E}^\mathbb{Q}\left[ \phi^\star_{\ell+1}\left(X_{T_{\ell}},X_{T_{\ell}},\ldots,X_{T_{\ell-d+1}}\right)\mid\mathcal{F}_{T_{\ell-1}}\right] \nonumber\\
& \leq & (n-\ell+1)\epsilon + \mathbb{E}^\mathbb{Q}\left[\sum_{k=\ell}^n h_k\left(X_{T_k},\ldots,X_{T_{k-d}}\right)\mid\mathcal{F}_{T_{\ell-1}}\right]\nonumber.
\end{eqnarray}
Then the general claim follows from backwards induction on $\ell$. The case $\ell=1$ then demonstrates the required inequality from the first step.

\item Lastly, we claim that the construction of $\mathbb{Q}$ can be carried out without invoking measurable selection. In particular, we claim that at any stage of the process, the joint measure of $\left(X_{T_k},X_{T_{k-1}},\ldots,X_{T_{1-d}}\right)$ is the sum of finitely-many Dirac measures. Then we are only invoking Lemma~\ref{Lemma:ConcaveEnvelopeApproximation} finitely-many times at each stage and do not require measurable selection.

The case $k=0$ follows directly from the definition of $\mathbb{Q}$. Then, if the joint measure of $\left(X_{T_k},X_{T_{k-1}},\ldots,X_{T_{1-d}}\right)$ is a sum of $m$ Dirac measures, then by definition, the conditional measure of $X_{T_{k+1}}$ given $\mathcal{F}_{T_k}$ is the sum of two Dirac measures. Then the joint measure of $\left(X_{T_{k+1}},X_{T_k},\ldots,X_{T_{1-d}}\right)$ is the sum of at-most $(2m)$ Dirac measures. Then the general result holds immediately by induction on $k$.

\end{enumerate}
\end{proof}

Of course, we then have as an immediate corollary that $p^\star = d^\star$.

\subsection{Proof of Theorem~\ref{Theorem:MainResult}}

Finally, we provide a short proof of Theorem~\ref{Theorem:MainResult} using the duality results of the previous section along with results from standard Lagrange duality theory.

\begin{proof}[Proof of Theorem~\ref{Theorem:MainResult}]

We start by claiming the following inequality holds:
\begin{equation}\label{Equation:LagrangianDuality}
p^\star \leq \inf\limits_{\lambda\geq 0}D(\lambda) := \inf\limits_{\lambda \geq 0} \sup\limits_{\mathbb{Q}\in\mathcal{Q}}\mathbb{E}^\mathbb{Q}\left[ \sum\limits_{k=1}^n f_k\left(X_{T_k},\ldots,X_{T_{k-d}}\right)+\lambda_k\cdot g_k\left(X_{T_k},\ldots,X_{T_{k-d}}\right) \right].
\end{equation}
This is a result of applying standard Lagrangian duality theory to \eqref{Equation:MainProblem} (see \cite{Rockafellar1970,Boyd2004}). Furthermore, we claim that if there exists $\mathbb{Q}\in\mathcal{Q}$ such that $\mathbb{E}^\mathbb{Q}\left[g_k\left(X_{T_k},\ldots,X_{T_{k-d}}\right)\right]\geq 0$ component-wise for each $k\in\{1,\ldots,n\}$, then we have equality above. This is Slater's condition in the special case of affine constraints, which can be found in the previous references or in \cite{Zalinescu2002,Ekeland1983} for specifics of the infinite-dimensional case.

For any fixed $\lambda\in\mathbb{R}^{p_1}\times\cdots\times\mathbb{R}^{p_n}$, define $h^\lambda_1,\ldots,h^\lambda_n : \mathbb{R}^{d+1}\to\mathbb{R}$ as
\[h^\lambda_k(y_k,\ldots,y_{k-d}) := f_k(y_k,\ldots,y_{k-d})+\lambda_k\cdot g_k(y_k,\ldots,y_{k-d})\]
for each $k\in\{1,\ldots,n\}$. If $f_1,\ldots,f_n$ and $g_1,\ldots,g_n$ are bounded from above by affine functions, then so are $h_1,\ldots,h_n$. Then by Lemma~\ref{Lemma:WeakDuality} and Lemma~\ref{Lemma:StrongDuality}, we have
\begin{equation}\label{Equation:LagrangianDuality2}
\begin{array}{rccl}
D(\lambda)& = & \sup\limits_{\mathbb{Q}} & \mathbb{E}^\mathbb{Q}\left[\sum\limits_{k=1}^n h^\lambda_k\left(X_{T_k},\ldots,X_{T_{k-d}}\right)\right] \\
 & = & \inf\limits_{\phi\in C_0(C^{d+1},\mathbb{R})^n} & \phi_1(x_0,x_0,\ldots,x_{1-d}) \\
&& \text{s.t.} & \phi_n \geq h^\lambda_n \\
&&& \phi_k(y_k,\ldots,y_{k-d}) \geq h^\lambda_k(y_k,\ldots,y_{k-d}) + \phi_{k+1}(y_k,y_k,\ldots,y_{k-d+1})\\
&&& \hspace{1cm}\text{for all }k\in\{1,\ldots,n-1\}\text{ and }(y_k,\ldots,y_{k-d})\in C^{d+1}\\
&&& \phi_k\text{ is concave in its first entry for each }k\in\{1,\ldots,n\}.
\end{array}
\end{equation}
Putting together \eqref{Equation:LagrangianDuality} and \eqref{Equation:LagrangianDuality2}, and introducing $h^\lambda_1,\ldots,h^\lambda_n$ as auxiliary variables in the minimization for convenience of notation, we obtain the stated result.

\end{proof}

\section{Discussion and Conclusion}

The key contribution of this paper is showing how to re-write a expectation-constrained robust maximization problem in a dual minimization form which intuitively encodes the computation of iterated concave envelopes. One immediate application is to obtain model-free no-arbitrage bounds on the prices of several exotic financial derivatives, but we note that the ideas may have broader application.

There is a broad literature on constrained stochastic control. One common approach is to introduce extra state variables corresponding to the constraint and directly solve a state-constrained optimal control problem \cite{Bouchard2012,Rokhlin2014,Katsoulakis1994}, although rigorously proving dynamic programming results can pose technical difficulties. There are particular technical difficulties introduced by our consideration of a maximization over martingale measures, as illustrated in \cite{Bayraktar2014,Ekren2014}. An alternative method is to re-write the original problem as a related multi-level optimization problem, as in \cite{Miller2015,Ankirchner2015,Bayraktar2016}, but this depends heavily on the structure of the problem. Lastly, many authors take a Lagrangian approach and convert the constrained maximization problem to a convex minimax problem \cite{Horiguchi2001,Makasu2009,Lopez1995}, although there are certainly theoretical issues with computing sub-gradients as illustrated in \cite{Miller2015b}. The contents of this paper are an addition to the latter approach, and suggest how to convert to a convex minimization problem when additional strong duality results holds.

The approach of this paper may initially seem different than much of the literature on super-replication, which often views the problem as an optimal control problem \cite{Galichon2014,Cox2015,Bonnans2013,Soner2003,Bentahar2006} or optimal martingale transport problem \cite{Beiglbock2013,Dolinsky2014,DeMarco2015}. Instead, we take the approach of viewing super-replication as a linear program subject to constraints, in the spirit of \cite{Acciaio2013,Riedel2011,Bayraktar2016b}. We note, however, there is an intuitive connection between a viscosity approach and ours, in that the minimization problem is analogous to the minimization over all viscosity super-solutions in the Perron method \cite{Ishii1987,Crandall1992}. We expect that in future research, this more general concept may be used to create minimization schemes for computing no-arbitrage price bounds for various types of exotic derivatives with path-dependent pay-offs requiring the introduction of additional state-variables.


{\small 

\bibliographystyle{siam}
\bibliography{Bibliography}
}
\end{document}